\documentclass[12pt]{article}
\usepackage{csquotes}
\usepackage[english]{babel}
\usepackage[T1]{fontenc}
\usepackage[utf8]{inputenc}
\usepackage[
  style=alphabetic,
  giveninits=true,
  maxnames=4,
  minnames=4,
  url=false
]{biblatex}
\usepackage[a4paper,margin=1in]{geometry}

\usepackage{amsmath,amssymb,amsfonts,amsthm,mathtools}
\DeclareMathOperator{\Var}{Var}
\usepackage{mathrsfs}    
\usepackage{bm}          
\newcommand{\Strat}{\mathrm{Strat}}
\usepackage{hyperref}
\usepackage[shortlabels]{enumitem}
\usepackage{coffeestains}

\numberwithin{equation}{section}

\theoremstyle{plain}
\newtheorem{theorem}{Theorem}[section]
\newtheorem{proposition}{Proposition}[section]

\theoremstyle{definition}
\newtheorem{definition}{Definition}[section]
\newtheorem{assumption}{Assumption}[section]

\theoremstyle{remark}
\newtheorem{remark}{Remark}[section]

\newcommand{\R}{\mathbb{R}}
\newcommand{\E}{\mathbb{E}}

\newcommand{\dd}{\mathrm{d}}
\DeclareMathOperator{\diag}{diag}

\title{Consumption–Investment with anticipative noise}
\author{
Mario Ayala\\
\small School of Computation, Information and Technology, \\
\small Chair for Analysis and Modelling, Technische Universität München, \\
\small Boltzmannstraße 3, 85748 Garching, Germany
\and
Benjamin Vallejo Jiménez\\
\small Facultad de Economía, Universidad de Colima\\
\small Josefa Ortiz de Domínguez \#64, Col. La Haciendita,\\
\small C.P. 28970, Villa de Álvarez, Colima, México
}
\date{August 2025}

\begin{document}

\maketitle
\begin{abstract}
We revisit the classical Merton consumption--investment problem when risky asset
returns are modeled by stochastic differential equations interpreted through a
general $\alpha$--integral, interpolating between It\^o, Stratonovich, and related
conventions. Holding preferences and the investment opportunity set fixed,
changing the noise interpretation modifies the effective drift of asset returns
in a systematic way.

For logarithmic utility and constant volatilities, we derive closed--form optimal
policies in a market with $n$ risky assets: optimal consumption remains a fixed
fraction of wealth, while optimal portfolio weights are shifted according to
\[
\theta_\alpha^*
=
V^{-1}(\mu-r\mathbf{1})
+
\alpha\,V^{-1}\diag(V)\,\mathbf{1},
\]
where $V$ is the return covariance matrix. In the single--asset case this reduces
to $\theta_\alpha^*=(\mu-r)/\sigma^2+\alpha$.

We then show that genuinely state--dependent effects arise when asset volatility
is driven by a stochastic factor correlated with returns. In this setting, the
$\alpha$--interpretation generates an additional drift correction proportional to
the instantaneous covariation between factor and return noise. As a canonical
example, we analyze a Heston stochastic volatility model, where the resulting
optimal risky exposure depends inversely on the current variance level.
\end{abstract}

\section{Introduction}

Models of intertemporal choice under uncertainty often rely on continuous--time
representations of portfolio dynamics. Since the seminal contributions of
Merton~\cite{Merton1969,Merton1971}, the standard approach assumes that asset
returns follow Brownian diffusions and that investors choose consumption and
portfolio shares to maximize expected discounted utility. Under these premises,
the resulting optimal policies have shaped much of modern asset pricing,
household finance, and dynamic decision theory.

Beyond the classical formulation, numerous extensions of the Merton problem have
been proposed, including models with Markov--modulated drifts and volatilities
that capture regime--switching behavior in financial markets
\cite{VallejoVenegasSoriano2015,VenegasMartinez2022,VallejoJimenez2017}. Further
developments allow for richer asset price dynamics, such as stochastic volatility
and jump--diffusion specifications, or more general time--varying investment
opportunities \cite{kraft2008optimal,Benth2001}. Other strands of the literature
introduce non--standard preferences and additional life--cycle features,
including habit formation, mortality, and insurance decisions
\cite{Merton1971,BoyleMertonSamuelson1992}, or incorporate portfolio constraints
and transaction costs to better reflect institutional and regulatory frictions
\cite{DavisNorman1990,MoehleBoyd2021}.

A key modelling convention shared by most of this literature is the use of the
It\^o integral to describe how information enters asset prices. Economically, the
It\^o interpretation embodies strict non--anticipation: trading strategies are
based on current information and discounted prices form semimartingales under the
usual modeling assumptions. This guarantees compatibility with the absence of
arbitrage and supports the normative interpretation of optimal portfolio
decisions.

However, empirical research shows that high--frequency returns are affected by
microstructure frictions such as discreteness, bid--ask bounce, order--flow
imbalances, and latency
\cite{HansenLunde2006,AitSahaliaYu2009,CarteaJaimungalPenalva2015,FabozziFocardiJonas2011}.
In such environments, observed returns reflect information aggregated over short
intervals, and the stylized It\^o convention may not accurately represent how
agents process shocks in real time. This has motivated interest in alternative
stochastic interpretations, most notably the Stratonovich integral and the more
general $\alpha$--interpretation, which evaluate increments at different points
within each time interval. Although these interpretations differ only by a
modelling convention about temporal averaging, they induce systematic and
interpretable changes in effective drift terms
\cite{DosReisPlatonov2021,YuanAo2012}.

Changing the interpretation of stochastic noise can also affect fundamental
structural properties of continuous--time models, such as time reversibility,
long--run invariant measures, and behavior under temporal aggregation
\cite{ayala2025reversibility}. These features play an important role in how
stochastic dynamics are estimated, coarse--grained, or calibrated from
high--frequency data, and are therefore directly relevant for economic decision
problems built on estimated dynamics. At the same time, when asset prices are
driven by noises that are not semimartingales, classical self--financing
conditions may permit arbitrage. Fractional Brownian motion provides a prominent
example: \cite{cheridito2003arbitrage} shows that fractional Bachelier and
Samuelson models admit arbitrage unless admissible strategies are severely
restricted, while \cite{Bender2003} constructs a Stratonovich--type integral for
fractional Brownian motion and demonstrates arbitrage in fractional
Black--Scholes markets. Taken together, these results illustrate that seemingly
innocuous modelling conventions governing how randomness is interpreted can have
economically significant implications.

\medskip

\subsection*{Contribution}

This paper revisits the Merton consumption--investment problem when noisy asset
returns are interpreted using a general $\alpha$--integral, interpolating between
It\^o, Stratonovich, and related conventions. We deliberately hold preferences,
technology, and the self--financing constraint fixed, and isolate the effect of
the noise interpretation on optimal behavior.

We first analyze markets with constant volatilities and show that, in this
setting, the $\alpha$--interpretation induces a transparent deterministic shift
in expected returns proportional to instantaneous variances. For logarithmic
utility, this yields closed--form optimal policies in a market with $n$ risky
assets. Optimal consumption remains a fixed proportion of wealth, while optimal
portfolio weights satisfy
\[
\theta_\alpha^*
  =
  V^{-1}(\mu - r\mathbf{1})
  +
  \alpha\,V^{-1}\diag(V)\,\mathbf{1},
\]
where $V=\Sigma\Sigma^\top$ is the return covariance matrix. In the single--asset
case this reduces to
\[
\theta_\alpha^* = \frac{\mu-r}{\sigma^2} + \alpha,
\]
implying that interpretations closer to the anticipative end of the
$\alpha$--scale prescribe higher optimal risky exposure than the It\^o benchmark.

We then show that genuinely state--dependent effects arise once asset volatility
is driven by a stochastic factor that is correlated with returns. In this
factor--driven setting, the $\alpha$--interpretation generates an additional
drift correction proportional to the instantaneous covariation between the
factor and the return noise. For logarithmic utility, optimal consumption remains
myopic, while the optimal risky fraction acquires a factor--dependent
$\alpha$--correction.

As a canonical illustration, we analyze a Heston stochastic volatility model, introduced in~\cite{heston1993closed} ,in
which the instantaneous variance follows a square--root diffusion and return
and volatility shocks are correlated. In this setting, the
$\alpha$--interpretation induces a constant shift in the effective expected
return and modifies the long--run mean of the variance process, while preserving
its CIR structure. For logarithmic utility, optimal consumption remains
proportional to wealth, while the optimal risky fraction takes the explicit form
\[
    c_t^\ast = \rho A_t,
    \qquad
    \pi_t^\ast
    =
    \frac{\mu_{\mathrm{eff}}-r}{V_t}
    =
    \frac{\mu-r}{V_t}
    + \alpha\,\varrho\,\frac{\xi}{2}\,\frac{1}{V_t}.
\]
The impact of the noise interpretation therefore scales inversely with the
current variance level, amplifying its economic relevance in low--volatility
regimes.

\medskip
Overall, our results show that modelling conventions governing the
interpretation of stochastic noise, often treated as innocuous, can have
economically meaningful and state--dependent consequences for optimal
intertemporal decisions, particularly in environments with stochastic volatility
and correlated sources of uncertainty.

\subsection*{Structure of the paper}
Section~\ref{sec: math prelimi} introduces the probabilistic framework and the
stochastic--calculus conventions used throughout.  
Section~\ref{subsec:merton-classical} revisits the classical one--asset Merton
consumption--investment problem under the It\^o interpretation, while
Section~\ref{sec:stratonovich} analyzes the same problem under the
Stratonovich convention and identifies the induced drift correction and its
impact on optimal portfolio choice.  
Section~\ref{sec:n-asset-alpha} extends the analysis to a market with $n$ risky
assets under a general $\alpha$--interpretation and derives closed--form optimal
consumption and portfolio rules together with their comparative statics in
$\alpha$.  
Section~\ref{sec:factor-correlated-noise} introduces factor--driven volatility
with correlated return and factor shocks, showing how the $\alpha$--interpretation
leads to genuinely state--dependent effects; the Heston stochastic volatility
model is treated as a canonical illustration.  
Appendix~\ref{app:dictionary} collects conversion formulas between It\^o,
Stratonovich, Klimontovich, and intermediate $\alpha$--conventions, including the
correlated--noise reduction used in the factor--driven setting.

\section{Preliminaries}

\subsection{Filtered probability space and Brownian motion}\label{sec: math prelimi}

We work on a complete probability space
\[
(\Omega,\mathcal{F},\mathbb{P})
\]
supporting an $n$-dimensional standard Brownian motion
\[
B=(B_t)_{t\ge 0}=(B_t^1,\dots,B_t^n)^{\top}.
\]
Thus $B_0=0$ almost surely, $B$ has continuous paths, and its increments are independent, stationary, and Gaussian with
\[
\mathbb{E}[B_t]=0,\qquad \mathrm{Cov}(B_t)=t\,I_n,\qquad t\ge 0,
\]
where $I_n$ is the $n\times n$ identity matrix.

Let $\{\mathcal{F}_t^B\}_{t\ge 0}$ be the natural filtration generated by $B$,
\[
\mathcal{F}_t^B := \sigma(B_s:\,0\le s\le t),
\]
and let $\{\mathbb{F}_t\}_{t\ge 0}$ denote its usual augmentation:
\[
\mathbb{F}_t := \bigcap_{u>t} (\mathcal{F}_u^B \vee \mathcal{N}),
\]
where $\mathcal{N}$ is the collection of $\mathbb{P}$-null sets contained in $\mathcal{F}$.
Throughout, $\{\mathbb{F}_t\}$ is assumed to be complete and right-continuous.  
All stochastic processes are $\mathbb{F}$-adapted unless explicitly stated otherwise.


\begin{definition}[Progressive measurability and the It\^o integral]
Let $\mathscr{P}$ be the progressive $\sigma$-algebra on $\Omega\times[0,\infty)$ associated with the filtration $\{\mathbb{F}_t\}$.

A process $H=(H_t)_{t\ge 0}$ taking values in $\mathbb{R}^{d\times n}$ is called progressively measurable if it is measurable with respect to $\mathscr{P}$.

We say that $H$ belongs to $\mathcal{H}^{2}_{\mathrm{loc}}(B)$ if
\[
\int_0^T \|H_t\|_F^2\,dt < \infty \quad\text{almost surely for all } T>0,
\]
where $\|A\|_F := (\mathrm{tr}(A^{\top}A))^{1/2}$ is the Frobenius norm.

For any $H\in \mathcal{H}^{2}_{\mathrm{loc}}(B)$, the It\^o integral
\[
\int_0^t H_s\,dB_s
\]
is an $\mathbb{R}^d$-valued continuous local martingale satisfying
\[
\left[\int_0^\cdot H_s\,dB_s\right]_t
   = \int_0^t H_s H_s^{\top}\,ds,
\]
and the It\^o isometry
\[
\mathbb{E}\left\|\int_0^t H_s\,dB_s\right\|^2
  = \mathbb{E}\int_0^t \mathrm{tr}(H_s H_s^{\top})\,ds.
\]
\end{definition}

\begin{remark}\label{rem:correlated}
Let $R$ be a symmetric positive definite $n\times n$ matrix.  
Choose any deterministic matrix $C$ satisfying $C C^{\top} = R$ (for instance, a Cholesky factor).
Define
\[
W_t := C B_t,\qquad t\ge 0.
\]
Then $W$ is an $n$--dimensional Brownian motion with covariance matrix $R$,
in the sense that
\[
[W]_t = R\,t.
\]

If $G$ is a progressively measurable $\mathbb{R}^{d\times n}$-valued process with 
$\int_0^T \|G_t\|_F^2\,dt<\infty$ almost surely, then
\[
\int_0^t G_s\,dW_s = \int_0^t (G_s C)\,dB_s,
\]
and
\[
\left[\int_0^\cdot G_s\,dW_s\right]_t = \int_0^t G_s R G_s^{\top}\,ds,
\]
\[
\mathbb{E}\left\|\int_0^t G_s\,dW_s\right\|^2
   = \mathbb{E}\int_0^t \mathrm{tr}(G_s R G_s^{\top})\,ds.
\]
Setting $R=I_n$ yields the standard Brownian case $W=B$.
\end{remark}

\begin{remark}\label{rem:matrix_notation}
For a matrix $M$, $\mathrm{tr}(M)$ denotes its trace and $\mathrm{diag}(M)$ the vector of its diagonal entries.  
For vectors $x,y\in\mathbb{R}^m$, the Euclidean inner product is $x^{\top}y$.
For matrices $A,B$ of the same size, the Frobenius inner product is 
\[
\langle A,B\rangle_F := \mathrm{tr}(A^{\top}B).
\]
\end{remark}

\begin{assumption}[Standing assumptions]\label{ass:standing}
Throughout the paper we impose the following conditions.

\begin{enumerate}[(i)]
\item \textbf{Stochastic basis.}
All processes are defined on a filtered probability space
\[
(\Omega,\mathcal F,(\mathcal F_t)_{t\ge0},\mathbb P)
\]
satisfying the usual conditions, and all stochastic integrals are taken with respect to
$(\mathcal F_t)$--Brownian motions.

\item \textbf{Regularity of coefficients.}
All drift and diffusion coefficients appearing in the asset, factor, and
wealth dynamics are deterministic functions.
More precisely, whenever a state process $Y=\{Y_t\}_{t\ge0}$ is defined as a
solution to an SDE of the form
\begin{equation}\label{eq:generic-alpha-sde}
    \dd Y_t
    =
    \mu(Y_t)\,\dd t
    +
    \sigma(Y_t)\,\circ_\alpha \dd W_t,
\end{equation}
the functions $\mu,\sigma:\mathbb R\to\mathbb R$ satisfy the following
conditions:
\begin{enumerate}[(a)]
\item \textbf{Local Lipschitz continuity:}  
For every $R>0$ there exists a constant $L_R>0$ such that
\[
    |\mu(x)-\mu(y)| + |\sigma(x)-\sigma(y)|
    \le L_R |x-y|,
    \qquad \forall\, x,y \in [-R,R].
\]

\item \textbf{Linear growth:}
There exists a constant $C>0$ such that
\[
    |\mu(x)| + |\sigma(x)|
    \le C\,(1+|x|),
    \qquad \forall\, x\in\mathbb R.
\]

\item \textbf{Differentiability of diffusion coefficients:}
Whenever an $\alpha$--to--It\^o conversion is performed, the diffusion
coefficient $\sigma$ is assumed to be continuously differentiable, with
derivative $\sigma'$ satisfying a linear growth bound
\[
    |\sigma'(x)| \le C\,(1+|x|),
    \qquad \forall\, x\in\mathbb R.
\]
\end{enumerate}
\end{enumerate}
\end{assumption}

\subsection{Classical one–asset consumption–investment problem}\label{subsec:merton-classical}

In this subsection we recall the classical continuous–time consumption–investment
problem with logarithmic utility introduced by Merton \cite{merton1975optimum}.
We work with $n=1$ and on the filtered probability space
$(\Omega,\mathcal{F},\{\mathbb{F}_t\}_{t\ge 0},\mathbb{P})$
specified in Section~\ref{sec: math prelimi}.

In the classical formulation, an infinitely–lived agent observes the evolution of a 
financial market in continuous time and must choose, at each moment, (i) how much 
wealth to consume and (ii) how to split remaining wealth between a risk–free asset and 
a risky asset.  The objective is to maximize discounted expected utility of consumption 
over the infinite horizon.  A central modelling assumption is that decisions cannot 
anticipate future randomness: \textbf{controls must be adapted to the market filtration}.  
We also impose the usual self–financing constraint, so changes in wealth arise solely 
from investment gains and consumption.  Under these assumptions, Merton’s framework 
reduces the economic problem to a stochastic control problem driven by Brownian noise.

\subsubsection{Assets and source of randomness}
On the filtered probability space 
$(\Omega,\mathcal{F},\{\mathbb{F}_t\}_{t\ge 0},\mathbb{P})$
we consider a risk–free money–market account $b=(b_t)_{t\ge 0}$ and a risky
asset $S=(S_t)_{t\ge 0}$.  
The money–market account evolves deterministically as
\begin{equation}\label{eq: money_market}
\mathrm{d} b_t = r\,b_t\,\mathrm{d} t,
\qquad r>0,
\end{equation}
while the risky asset follows a geometric Brownian motion,
\begin{equation}\label{eq:1d-gbm-S}
\mathrm{d} S_t
    = \mu\, S_t\,\mathrm{d} t
    + \sigma\, S_t\,\mathrm{d} W_t,
\qquad \mu\in\mathbb{R},\ \sigma>0,
\end{equation}
where $W=(W_t)_{t\ge 0}$ is a one–dimensional standard Brownian motion 
adapted to $\{\mathbb{F}_t\}$.

\subsubsection{Controls and wealth dynamics}
We denote by $a_t$ the total wealth of the investor at time $t$.
A \emph{control} is a progressively measurable pair $(c_t,\theta_t)$, where
\begin{itemize}
    \item $c_t\ge 0$ is the consumption rate; the cumulative consumption process
    is $C_t=\int_0^t c_s\,\mathrm{d}s$,
    \item $\theta_t\in\mathbb{R}$ is the fraction of current wealth invested in
    the risky asset (so $1-\theta_t$ is invested in the money–market account).
\end{itemize}
The \emph{self–financing} condition stipulates that changes in wealth, whose value at time $t \geq0$ we denote by $a_t$,
are due only to trading gains/losses and consumption.  In differential form,
this reads
\begin{equation}\label{eq:wealth-identity}
\mathrm{d} a_t
  = a_t(1-\theta_t)\,\frac{\mathrm{d} b_t}{b_t}
  + a_t\theta_t\,\frac{\mathrm{d} S_t}{S_t}
  - c_t\,\mathrm{d} t.
\end{equation}
Substituting \eqref{eq: money_market} and \eqref{eq:1d-gbm-S} into
\eqref{eq:wealth-identity} yields the It\^o SDE for wealth,
\begin{equation}\label{eq:1d-wealth}
\mathrm{d} a_t
  = a_t\Big(r + \theta_t(\mu-r) - \frac{c_t}{a_t}\Big)\mathrm{d} t
  + a_t\,\theta_t\,\sigma\,\mathrm{d} W_t.
\end{equation}

\begin{definition}[Admissible controls]
A progressively measurable pair $(c_t,\theta_t)$ is \emph{admissible} if:
\begin{enumerate}
    \item $c_t\ge 0$ and $\theta_t\in\mathbb{R}$ for all $t\ge 0$, and
          $C_t=\int_0^t c_s\,\mathrm{d} s$ has almost surely finite variation;
    \item the wealth process $a_t$ is strictly positive and satisfies the SDE
          \eqref{eq:1d-wealth};
    \item for every $T>0$,
          \begin{equation}\label{eq:int-cond}
          \mathbb{E}\!\int_0^T \bigl(c_t + \theta_t^2\sigma^2 a_t^2\bigr)\,\mathrm{d} t < \infty.
          \end{equation}
\end{enumerate}
\end{definition}

\begin{remark}
The requirements in the definition of admissible controls ensure both economic 
coherence and mathematical well–posedness.  Their roles can be summarized as follows:

\begin{itemize}
    \item \textbf{Regularity of consumption.}  
    The conditions $c_t\ge 0$ and the finite variation of 
    $C_t=\int_0^t c_s\,\mathrm{d}s$ rule out irregular or distributional 
    consumption paths.  
    This ensures that consumption enters the budget identity in a meaningful 
    and economically interpretable way.

    \item \textbf{Positivity and well–defined wealth dynamics.}  
    Imposing $a_t>0$ and requiring that $a_t$ satisfies \eqref{eq:1d-wealth} 
    ensures existence, uniqueness, and nonexplosion of the wealth process.  
    Economically, it prevents bankruptcy in finite time and excludes 
    unbounded negative wealth positions that would make the optimization 
    problem degenerate.

    \item \textbf{Square–integrability of controls.}  
    The integrability condition~\eqref{eq:int-cond}
    excludes trading strategies that generate infinite instantaneous variation 
    or rely on unbounded leverage.  
    Analytically, it guarantees that the stochastic integral in the wealth 
    equation is well-defined and that the HJB verification argument applies.
\end{itemize}

Taken together, these conditions eliminate pathological strategies and ensure 
that the consumption–investment problem is mathematically well posed.
\end{remark}

\subsubsection{Utility maximization problem}
Given a subjective discount rate $\rho>0$, the investor selects an admissible 
control $(c_t,\theta_t)$ to maximize the expected discounted utility of 
consumption.  The associated \emph{value function} is defined by
\begin{equation}\label{eq:1d-objective}
J(a,t)
  := \sup_{(c,\theta)\ \text{adm.}}
     \mathbb{E}\!\left[
        \int_t^\infty e^{-\rho (s-t)}\,u(c_s)\,\mathrm{d}s
        \,\Big|\, a_t=a
     \right],
\end{equation}
where $a>0$ denotes current wealth at time $t$, and where throughout this 
subsection we take $u(x)=\ln x$.  
Under logarithmic utility the investor exhibits constant relative risk aversion, 
and the optimal portfolio rule depends only on the instantaneous investment 
opportunities (the so-called \emph{myopic} property).

To characterize the function $J$, we assume that the dynamic programming 
principle holds and that $J$ is sufficiently smooth: specifically, 
$J\in C^{2,1}((0,\infty)\times[0,\infty))$, with 
\[
J_a(a,t)=\frac{\partial J}{\partial a}(a,t), \qquad
J_{aa}(a,t)=\frac{\partial^2 J}{\partial a^2}(a,t), \qquad
J_t(a,t)=\frac{\partial J}{\partial t}(a,t).
\]
Applying It\^o's formula to $J(a_t,t)$ along the wealth dynamics 
\eqref{eq:1d-wealth} yields that $J$ must satisfy the Hamilton--Jacobi--Bellman 
equation
\begin{equation}\label{eq:1d-hjb}
0 = \sup_{c\ge 0,\ \theta\in\mathbb{R}}
\left\{
\ln c - \rho J(a,t) + J_t(a,t)
+ J_a(a,t)\,a\!\left(r+\theta(\mu-r)-\frac{c}{a}\right)
+ \tfrac12\,J_{aa}(a,t)\,a^2\theta^2\sigma^2
\right\}.
\end{equation}

Because $\ln$ is strictly concave in $c$ and $J_{aa}(a,t)\le 0$ at an interior 
optimum (the value function is concave in wealth), the maximizers of 
\eqref{eq:1d-hjb} are characterized by the corresponding first–order conditions.

\begin{theorem}[Merton solution for one risky asset]\label{thm:Merton1D}
Consider the wealth dynamics \eqref{eq:1d-wealth} under admissible controls 
$(c_t,\theta_t)$, and let the utility function be $u(x)=\ln x$.  
Assume that the value function $J$ solves the HJB equation 
\eqref{eq:1d-hjb} and is sufficiently smooth to justify the first–order 
conditions.  Then the unique optimal controls are
\[
c_t^* = \rho\,a_t,
\qquad
\theta_t^* = \frac{\mu - r}{\sigma^2},
\]
and the corresponding value function is
\begin{equation}\label{eq:1d-value}
J(a,t)
= \left(\beta_0 + \frac{1}{\rho}\ln a \right)e^{-\rho t},
\end{equation}
where
\begin{equation}
\beta_0
= \frac{1}{\rho}\Bigl(\frac{r}{\rho} + \ln \rho - 1\Bigr)
  + \frac{1}{2\rho^2}\,\frac{(\mu-r)^2}{\sigma^2}.
\end{equation}
\end{theorem}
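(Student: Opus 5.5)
The plan is to solve \eqref{eq:1d-hjb} by a separable ansatz in which $\ln a$ enters with coefficient $1/\rho$. I would take
\[
J(a,t)=e^{-\rho t}\,\varphi(a),\qquad \varphi(a)=\beta_0+\tfrac1\rho\ln a ,
\]
and verify that the first-order conditions give the stated controls and this $\beta_0$.

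\textbf{Normalization.} Before computing, I need to fix a consistent reading of \eqref{eq:1d-hjb}. Because the objective \eqref{eq:1d-objective} discounts from the current time $t$, applying the term $-\rho J+J_t$ together with the factor $e^{-\rho t}$ would count discounting twice. I will therefore read \eqref{eq:1d-hjb} as the current-value equation for $\varphi$:
\[
0=\sup_{c\ge0,\theta}\bigl\{\ln c-\rho\varphi+\varphi'(a)\,a\bigl(r+\theta(\mu-r)-c/a\bigr)+\tfrac12\varphi''(a)a^2\theta^2\sigma^2\bigr\}.
\]
Equivalently, one keeps the discounted form with $e^{-\rho t}\ln c$ and drops the $-\rho J$ term; both readings lead to the same algebra below.

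\textbf{Derivatives and first-order conditions.} From $\varphi'(a)=1/(\rho a)$ and $\varphi''(a)=-1/(\rho a^2)$, the Hamiltonian is strictly concave in $c$, since $\ln$ is strictly concave. It is also strictly concave in $\theta$, with coefficient $-\sigma^2/(2\rho)<0$. Hence any critical point is the unique maximizer.
\begin{itemize}
\item The condition in $c$ is $1/c=\varphi'(a)=1/(\rho a)$, which gives $c^*=\rho a$.
\item The condition in $\theta$ is $(\mu-r)/\rho-\theta\sigma^2/\rho=0$, which gives $\theta^*=(\mu-r)/\sigma^2$.
\end{itemize}
Both controls are independent of $t$ and proportional to wealth or constant, which is the myopic property.

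\textbf{Solving for $\beta_0$.} Substituting $c^*$ and $\theta^*$ back into the equation, the $\ln a$ terms cancel. This cancellation is exactly why the coefficient $1/\rho$ was chosen in the ansatz. What remains is
\[
\ln\rho-\rho\beta_0+\frac{r-\rho}{\rho}+\frac{(\mu-r)^2}{\rho\sigma^2}-\frac{(\mu-r)^2}{2\rho\sigma^2}=0 .
\]
Solving gives $\rho\beta_0=\ln\rho+r/\rho-1+(\mu-r)^2/(2\rho\sigma^2)$, which is the stated $\beta_0$.

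\textbf{Main obstacle.} The delicate step is not this algebra but the justification that the HJB solution is the value function; the theorem statement already assumes this. If a verification argument were wanted, I would proceed as follows.
\begin{itemize}
\item Apply It\^o's formula to $e^{-\rho s}\varphi(a_s)$ under an arbitrary admissible control.
\item Use the HJB inequality to bound the drift.
\item Use \eqref{eq:int-cond} to make the stochastic integral a true martingale. Since $\varphi'(a)\,a\,\theta\sigma=\theta\sigma/\rho$, the integrand is square-integrable whenever $\theta$ is.
\item Control the transversality term $e^{-\rho T}\mathbb E[\ln a_T]\to0$. Under the candidate policy this is immediate, because $\ln a_T$ is Gaussian with mean linear in $T$. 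For general admissible controls one would restrict the class of controls or use a localization argument.
\end{itemize}
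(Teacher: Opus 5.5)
Your argument is correct and follows essentially the paper's route. The paper proves this through Theorem~\ref{thm:n-asset-alpha}, which is this statement for $n=1$, $\alpha=0$: it inserts the log ansatz $(\beta_0+\beta_1\ln a)e^{-\rho t}$, takes first-order conditions in a Hamiltonian that is strictly concave in $(c,\theta)$, and matches the $\ln a$ and constant terms. The only cosmetic difference is that the paper keeps $\beta_1$ free and derives $\beta_1=1/\rho$ from the vanishing $\ln a$ coefficient, whereas you fix it in advance and check the cancellation.

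Your normalization remark is well founded. Read literally, \eqref{eq:1d-hjb} gives $-\rho J+J_t=-2\rho J$ alongside an undiscounted $\ln c$, which is incompatible with the ansatz; the stated $\beta_0$ comes out exactly under your current-value reading, which is what the paper's ``divide out $e^{-\rho t}$'' step tacitly uses. There is one small slip in your optional verification sketch: \eqref{eq:int-cond} controls $\theta a$ rather than $\theta$, so it does not by itself make $\int e^{-\rho s}\theta_s\sigma\rho^{-1}\,\mathrm{d}W_s$ a true martingale.
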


\begin{remark}
In Merton's original formulation \cite{merton1975optimum}, the logarithmic case 
appears as a special instance of the general HARA (hyperbolic absolute risk 
aversion) utility class.  
For this specific utility, the HJB equation \eqref{eq:1d-hjb} admits a closed-form 
solution, and the first–order conditions lead directly to the explicit controls 
and value function stated in Theorem~\ref{thm:Merton1D}.  
Several modern expositions provide streamlined derivations of these formulas.
\end{remark}

\begin{remark}[Wealth dynamics under the optimal policy]
Substituting $c_t^*=\rho\,a_t$ and $\theta_t^*=(\mu-r)/\sigma^2$ into the wealth 
equation \eqref{eq:1d-wealth} yields the geometric SDE
\[
\frac{\mathrm{d}a_t}{a_t}
  = \left(r + \frac{(\mu-r)^2}{\sigma^2} - \rho\right)\mathrm{d}t
    + \frac{\mu-r}{\sigma}\,\mathrm{d}W_t.
\]
Applying It\^o's formula to $\ln a_t$ gives
\[
\mathrm{d}(\ln a_t)
  = \left(r - \rho + \frac12\,\frac{(\mu-r)^2}{\sigma^2}\right)\mathrm{d}t
    + \frac{\mu-r}{\sigma}\,\mathrm{d}W_t,
\]
which is consistent with the value function \eqref{eq:1d-value}.  
Moreover, under the optimal policy the transversality condition
\[
\lim_{T\to\infty}\mathbb{E}\!\left[e^{-\rho T}\,J(a_T,T)\right]=0
\]
holds, ensuring that the infinite-horizon optimization problem is well posed.
\end{remark}

\begin{remark}
In the classical formulation above, the economic environment is fully characterised 
by three modelling choices: the self–financing wealth equation, the restriction to 
non–anticipative (i.e., $\mathbb{F}_t$–adapted) controls, and the adoption of the 
It\^o interpretation for the stochastic integral.  These assumptions jointly determine 
the wealth dynamics and, through the associated HJB equation, the resulting optimal 
consumption and portfolio policies.  In particular, the familiar Merton portfolio 
share arises precisely from this combination of conventions.

In the remainder of the paper we illustrate that, once the interpretation of the noise 
is allowed to vary , moving from It\^o to Stratonovich, Klimontovich, or the general 
$\alpha$–scheme, the optimal investment rule changes in a systematic and 
quantifiable manner.  The reason is that different stochastic interpretations induce 
different effective drift terms in the wealth dynamics; from an economic perspective, 
this corresponds to altering the informational structure under which the agent 
operates.  For $\alpha>0$, the induced drift correction increases the effective 
return of the risky asset and leads, under logarithmic utility, to a larger 
optimal portfolio share.
\end{remark}

\section{Stratonovich interpretation and its impact on optimal investment}
\label{sec:stratonovich}

We now reconsider the one–asset consumption–investment problem when the risky
asset is modelled with the Stratonovich interpretation of noise, corresponding
to $\alpha=\tfrac12$ in the general $\alpha$–scheme.

\subsection{Stratonovich GBM and effective drift}

Let the risky asset $S=(S_t)_{t\ge0}$ satisfy
\begin{equation}\label{eq:Strat-S}
\frac{\dd S_t}{S_t}
    = \mu\,\dd t + \sigma\,\circ_{1/2}\dd W_t,
    \qquad \mu\in\R,\ \sigma>0,
\end{equation}
while the money–market account evolves as in \eqref{eq: money_market}.
A Stratonovich SDE of the form
\[
\dd X_t=b(X_t)\,\dd t + \sigma(X_t)\,\circ_{1/2}\dd W_t
\]
is equivalent, using Proposition~\ref{prop:alpha-gamma-dictionary}, to the Itô SDE
\[
\dd X_t
    = \bigl(b(X_t)+\tfrac12\sigma(X_t)\sigma'(X_t)\bigr)\dd t
      +\sigma(X_t)\,\dd W_t.
\]
Since $\sigma(x)=\sigma x$ in \eqref{eq:Strat-S}, we obtain the equivalent Itô form
\begin{equation}\label{eq:Strat-S-Ito}
\frac{\dd S_t}{S_t}
    = \bigl(\mu+\tfrac12\sigma^2\bigr)\dd t + \sigma\,\dd W_t.
\end{equation}
Thus the Stratonovich interpretation does not merely change the calculus rules:
for the same parameters $(\mu,\sigma)$ it produces a higher \emph{effective drift}
\[
\mu_{\mathrm{eff}}^{(1/2)}=\mu+\tfrac12\sigma^2.
\]

\subsection{Wealth dynamics and the origin of the modified Merton rule}

With $a_t>0$ denoting wealth, $c_t\ge0$ consumption, and $\theta_t\in\R$ the
risky fraction, the self–financing identity \eqref{eq:wealth-identity} gives the
wealth SDE
\begin{equation}\label{eq:wealth-Strat}
\dd a_t
    = a_t\Bigl(r + \theta_t\bigl(\mu+\tfrac12\sigma^2-r\bigr)
               -\frac{c_t}{a_t}\Bigr)\dd t
      + a_t\theta_t\sigma\,\dd W_t,
\end{equation}
which differs from the classical Itô formulation only through the drift adjustment
$\mu-r\mapsto \mu+\tfrac12\sigma^2-r$.

Crucially, \emph{all steps of the dynamic programming argument remain unchanged}.
The Hamilton–Jacobi–Bellman equation is the same as in
\eqref{eq:1d-hjb}, except that the drift of the risky asset entering the
Hamiltonian is now the effective drift~\eqref{eq:Strat-S-Ito}.

For logarithmic utility this immediately yields
\[
\theta_t^{*,\,\Strat}
    = \frac{\mu+\tfrac12\sigma^2-r}{\sigma^2}
    = \frac{\mu-r}{\sigma^2} + \frac12
    = \theta_t^{*,\,Ito} + \frac12,
\]
i.e.\ Stratonovich noise leads the log–utility investor to allocate an
additional fraction $1/2$ of wealth into the risky asset.

\subsection{The chain rule viewpoint}

Although the optimal control problem itself is unaffected, the Stratonovich
interpretation has a convenient analytic feature: ordinary calculus applies to
variable transformations.  
For example, for $x_t:=\ln a_t$ the Stratonovich chain rule gives
\[
\dd x_t
  = \frac{1}{a_t}\,\circ_{1/2}\dd a_t
  = \Bigl(r + \theta_t(\mu-r) - c_t e^{-x_t}\Bigr)\dd t
    + \theta_t\sigma\,\circ_{1/2}\dd W_t.
\]
This classical differential form allows one to derive the HJB by a Taylor
expansion and the short-time variance
$\Var(x_{t+h}-x_t)=\theta_t^2\sigma^2 h + o(h)$, rather than invoking Itô’s
lemma explicitly.

However, it is important to emphasize that \emph{the change in the optimal
portfolio rule is not caused by the chain rule}.  
The economic effect arises exclusively from the drift correction
$\mu\mapsto\mu+\tfrac12\sigma^2$ intrinsic to the Stratonovich
($\alpha=\tfrac12$) interpretation of noise.

\begin{remark}[Summary]
Under the Stratonovich interpretation:
\begin{itemize}
    \item the effective drift of the risky asset increases by $\tfrac12\sigma^2$;
    \item all structural features of the Merton problem remain intact;
    \item the optimal risky allocation increases by exactly~$\tfrac12$;
    \item the chain rule becomes classical, which can simplify transformations
          but does not modify the economics of the control problem.
\end{itemize}
This viewpoint will generalize directly to arbitrary $\alpha\in[0,1]$, where the
optimal risky fraction increases by exactly~$\alpha$.
\end{remark}


\section{$n$ risky assets under the $\alpha$--interpretation}
\label{sec:n-asset-alpha}

We now extend the log–utility consumption–investment problem to a market with
$n$ risky assets.  
Throughout this section $\circ_\alpha$ denotes the stochastic integral in the
$\alpha$–interpretation, and the $n$–dimensional Brownian motion is the one
specified in Section~\ref{sec: math prelimi}.  
We write $\mathbf{1}\in\R^n$ for the vector of ones.

\subsection{Risky-asset dynamics and the $\alpha$–dependent drift shift}
Let $S_t=(S_t^1,\dots,S_t^n)^\top\in(0,\infty)^n$ denote the vector of risky
asset prices.  
Fix a constant drift $\mu\in\R^n$ and a constant volatility loading
$\Sigma\in\R^{n\times n}$, and set
\[
V \;:=\; \Sigma\Sigma^\top.
\]
For notational convenience let $\mathrm{D}(S):=\mathrm{diag}(S^1,\dots,S^n)$.
We model prices in the $\alpha$–interpretation by the vector SDE
\begin{equation}\label{eq:S-alpha}
\dd S_t
  = \mathrm{D}(S_t)\Big(\mu\,\dd t + \Sigma\,\circ_\alpha\dd W_t\Big).
\end{equation}
Componentwise,
\[
\dd S_t^i
  = S_t^i\Bigl(\mu_i\,\dd t + \sum_{k=1}^n \Sigma_{ik}\circ_\alpha\dd W_t^k\Bigr).
\]

\begin{proposition}[Itô form of the $\alpha$–interpreted SDE]
\label{prop:alpha-to-ito}
The $\alpha$–SDE \eqref{eq:S-alpha} is equivalent to the Itô SDE
\[
\dd S_t
 = \mathrm{D}(S_t)\Big(\mu^{\mathrm{Ito}}\,\dd t + \Sigma\,\dd W_t\Big),
 \qquad
 \mu^{\mathrm{Ito}}:=\mu+\alpha\,\diag(V),
\]
i.e.\ only the drift changes, while the diffusion remains $\Sigma\,\dd W_t$.
Equivalently, for each $i=1,\dots,n$,
\[
\dd S_t^i
  = S_t^i\Bigl(\mu_i+\alpha V_{ii}\Bigr)\dd t
    + S_t^i\sum_{k=1}^n \Sigma_{ik}\dd W_t^k.
\]
\end{proposition}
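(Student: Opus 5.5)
The plan is to apply the multidimensional $\alpha$--to--It\^o conversion rule (the vector form of Proposition~\ref{prop:alpha-gamma-dictionary} in Appendix~\ref{app:dictionary}), with the $\alpha$--integral understood as the limit of Riemann sums evaluated at the intermediate point $(1-\alpha)t_j+\alpha t_{j+1}$. For an SDE $\dd X_t=b(X_t)\,\dd t+G(X_t)\circ_\alpha\dd W_t$ with $G:\R^n\to\R^{n\times n}$ continuously differentiable, this rule says the equivalent It\^o SDE has drift
\[
b_i(x)+\alpha\sum_{k=1}^n\sum_{j=1}^n G_{jk}(x)\,\partial_{x_j}G_{ik}(x).
\]
Heuristically, the correction comes from expanding $G(X_{t_j+\alpha\Delta})\approx G(X_{t_j})+\nabla G(X_{t_j})\,(X_{t_j+\alpha\Delta}-X_{t_j})$. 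We then pair the increment $G_{jk}\,\Delta W^{k}$ over the fraction $\alpha$ of the interval with $\Delta W^k$, and use $[W^k,W^{k'}]_t=\delta_{kk'}t$. If the appendix only records the scalar case, I would prove this vector version directly. Concretely, I would write the $\alpha$--integral as the It\^o integral plus $\alpha$ times the quadratic covariation $\sum_k[G_{\cdot k}(X),W^k]$, and then compute that covariation by It\^o's formula applied to $G_{ik}(X_t)$.

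Next I would specialise to $G(x)=\mathrm{D}(x)\Sigma$, that is, $G_{ik}(x)=x_i\Sigma_{ik}$, so that $\partial_{x_j}G_{ik}(x)=\delta_{ij}\Sigma_{ik}$. The correction term then collapses:
\[
\alpha\sum_{k}\sum_j x_j\Sigma_{jk}\,\delta_{ij}\Sigma_{ik}=\alpha\,x_i\sum_k\Sigma_{ik}^2=\alpha\,x_i V_{ii}.
\]
Hence the It\^o drift of $S^i$ is $S^i(\mu_i+\alpha V_{ii})$, while the diffusion coefficient is unchanged. This gives both the vector form $\mu^{\mathrm{Ito}}=\mu+\alpha\diag(V)$ and the componentwise statement. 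The equivalence runs in both directions: any continuous semimartingale solving one equation solves the other, because the two integrals differ exactly by the covariation term.

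Well-posedness follows because the coefficients are linear, so both equations have unique strong solutions. Positivity of the It\^o solution comes from the explicit formula
\[
S_t^i=S_0^i\exp\Bigl((\mu_i+\alpha V_{ii}-\tfrac12V_{ii})t+\sum_k\Sigma_{ik}W_t^k\Bigr).
\]
Justifying the covariation identity for the linear, non-globally-bounded coefficient $G$ only needs a localization argument, in the spirit of Assumption~\ref{ass:standing}(ii)(c).

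The main obstacle is the first step: establishing, with the proper index bookkeeping, the multidimensional conversion in which the covariation $[G_{ik}(X),W^k]$ involves all components $X^j$. Once the pairing $\sum_j G_{jk}\partial_jG_{ik}$ is justified, the rest is a one-line computation.
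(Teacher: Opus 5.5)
Your proposal is correct and takes essentially the same route as the paper: apply the $\alpha$--to--It\^o drift correction $\alpha\sum_{j,k}B_{jk}\,\partial_{x_j}B_{ik}$ with $B(S)=\mathrm{D}(S)\Sigma$, use $\partial_{x_j}B_{ik}=\Sigma_{ik}\delta_{ij}$, and collapse the sum to $\alpha S^iV_{ii}$. The multidimensional conversion you flag as the main obstacle is already stated in Proposition~\ref{prop:alpha-gamma-dictionary} (with $d=m=n$), so your covariation-based derivation, the two-way equivalence, and the positivity check add rigor that the paper leaves to that appendix result; they do not change the argument.
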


\begin{proof}
Write \eqref{eq:S-alpha} as $\dd X_t=a(X_t)\dd t + B(X_t)\circ_\alpha \dd W_t$
with $X=S$, $a(S)=\mathrm{D}(S)\mu$, $B(S)=\mathrm{D}(S)\Sigma$.  
The standard $\alpha$–to–Itô conversion formula gives
\[
a_i^{\mathrm{Ito}}(x)
  = a_i(x)
    + \alpha\sum_{k=1}^n\sum_{j=1}^n B_{jk}(x)\,\partial_{x_j} B_{ik}(x).
\]
Since $B_{ik}(S)=\Sigma_{ik} S^i$ depends only on $S^i$,  
$\partial_{x_j}B_{ik}(S)=\Sigma_{ik}\delta_{ij}$ and hence
\[
\sum_{k,j} B_{jk}(S)\,\partial_{x_j}B_{ik}(S)
  = \sum_k B_{ik}(S)\Sigma_{ik}
  = S^i \sum_k \Sigma_{ik}^2
  = S^i V_{ii}.
\]
Thus $a_i^{\mathrm{Ito}}(S)=S^i(\mu_i+\alpha V_{ii})$ and diffusion remains
$B(S)=\mathrm{D}(S)\Sigma$, yielding the claim.
\end{proof}

\begin{remark}[Economic interpretation]
Under the $\alpha$–interpretation, all parameters $(\mu,\Sigma)$ are kept fixed,
but the effective expected returns become
\[
\mu^{\mathrm{Ito}}=\mu+\alpha\,\diag(V).
\]
In dimension one this gives $\mu^{\mathrm{Ito}}=\mu+\alpha\sigma^2$, increasing
expected returns by $\alpha\sigma^2$.  
Thus, interpreting noise closer to the anticipative end of the scale
($\alpha>0$) increases the risk premium and consequently leads to larger optimal
positions in risky assets.
\end{remark}

\subsection{Wealth dynamics under the $\alpha$–interpretation}

Let $\theta_t\in\R^n$ denote the vector of portfolio weights invested in risky
assets, and $c_t\ge0$ denote consumption.  
As in the one-dimensional case, the self–financing identity gives the Itô wealth
dynamics
\begin{equation}\label{eq:wealth-alpha}
\dd a_t
  = a_t\Bigl(r+\theta_t^\top(\mu^{\mathrm{Ito}}-r\mathbf{1})
               - \tfrac{c_t}{a_t}\Bigr)\dd t
    + a_t\,\theta_t^\top\Sigma\,\dd W_t,
\end{equation}
where $\mu^{\mathrm{Ito}}$ is given in Proposition~\ref{prop:alpha-to-ito}.
The only difference from the standard Merton model is the drift adjustment
\(
\mu-r\mathbf{1} \mapsto \mu^{\mathrm{Ito}}-r\mathbf{1}.
\)

\subsection{Optimization problem}

As in the one--asset case, admissible controls $(c_t,\theta_t)$ are required to
be progressively measurable, to satisfy $c_t\ge0$ and $a_t>0$, and to fulfill
the integrability conditions~\eqref{eq:int-cond}, which ensure that the wealth
process is well defined and that the objective functional is finite.

\medskip
For logarithmic utility, the optimization problem remains time--homogeneous and
Markovian in the wealth variable. We therefore introduce the value function
\begin{equation}\label{eq:J-alpha}
J(a,t)
  := \sup_{(c,\theta)}
      \E\!\left[
         \int_t^\infty e^{-\rho(s-t)}\ln c_s\,\dd s\;\Big|\;a_t=a
      \right].
\end{equation}
Standard dynamic programming arguments then yield the Hamilton--Jacobi--Bellman
equation
\begin{equation}\label{eq:HJB-alpha}
0=\sup_{c\ge0,\ \theta\in\R^n}\Big\{
     \ln c
     -\rho J + J_t
     +J_a\bigl(a(r+\theta^\top(\mu^{\mathrm{Ito}}-r\mathbf{1}))-c\bigr)
     +\tfrac12 J_{aa}\,a^2\,\theta^\top V \theta
  \Big\}.
\end{equation}
Compared to the classical Merton HJB, the only modification is the replacement
of the excess--return vector $\mu-r\mathbf{1}$ by its $\alpha$--adjusted version
$\mu^{\mathrm{Ito}}-r\mathbf{1}$, as identified in
Proposition~\ref{prop:alpha-to-ito}.

\subsection{Optimal policies and value function}

The logarithmic utility function preserves the homothetic structure of the
problem even in the presence of multiple risky assets. As a consequence, the
HJB equation admits an explicit solution, and optimal policies can be obtained
in closed form.

\begin{theorem}[Log utility with $n$ risky assets under $\alpha$]
\label{thm:n-asset-alpha}
Assume that the covariance matrix $V=\Sigma\Sigma^\top$ is positive definite.
Then the optimal consumption rate and portfolio weights are given by
\begin{equation}\label{eq:opt-alpha}
c_t^* = \rho\,a_t,
\qquad
\theta_\alpha^*
  = V^{-1}(\mu^{\mathrm{Ito}}-r\mathbf{1})
  = V^{-1}(\mu-r\mathbf{1})
    + \alpha\,V^{-1}\diag(V)\,\mathbf{1}.
\end{equation}
The corresponding value function is
\[
J(a,t)
  = \Bigl(\beta_0+\rho^{-1}\ln a\Bigr)e^{-\rho t},
\qquad
\beta_0
  = \frac{1}{\rho}\Bigl(\tfrac{r}{\rho}+\ln\rho-1\Bigr)
    +\frac{1}{2\rho^2}(\mu^{\mathrm{Ito}}-r\mathbf{1})^\top
       V^{-1}(\mu^{\mathrm{Ito}}-r\mathbf{1}).
\]
\end{theorem}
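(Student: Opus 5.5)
The proof is a guess-and-verify argument for the HJB equation \eqref{eq:HJB-alpha}. By Proposition~\ref{prop:alpha-to-ito}, the wealth dynamics \eqref{eq:wealth-alpha} are an ordinary It\^o Merton model with excess return vector $m:=\mu^{\mathrm{Ito}}-r\mathbf{1}$ and covariance $V$. The noise interpretation therefore enters only through $m$, and the computation of Theorem~\ref{thm:Merton1D} can be repeated in vector form.

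\textbf{Step 1: ansatz and derivatives.} Take the ansatz $J(a,t)=(\beta_0+\rho^{-1}\ln a)e^{-\rho t}$. Then
\[
J_a=\frac{e^{-\rho t}}{\rho a},\qquad
J_{aa}=-\frac{e^{-\rho t}}{\rho a^2}<0,\qquad
J_t=-\rho J .
\]
Since $J_{aa}<0$ and $V$ is positive definite, the Hamiltonian is strictly concave in $(c,\theta)$, so the first-order conditions give the unique maximizer.

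\textbf{Step 2: first-order conditions.} In $c$ the condition reads $1/c=J_a$, which gives $c^*=\rho a e^{\rho t}$. I expect the main obstacle to be here, a bookkeeping issue about discounting. Because \eqref{eq:J-alpha} discounts back to time $t$, the value function is in fact time-homogeneous. The cleanest fix is to work with the current-value function $\tilde J(a)=\beta_0+\rho^{-1}\ln a$ and the stationary HJB
\[
\rho\tilde J=\sup_{c,\theta}\bigl\{\ln c+\tilde J_a\,(a(r+\theta^\top m)-c)+\tfrac12\tilde J_{aa}a^2\theta^\top V\theta\bigr\}.
\]
Equivalently, interpret $J$ as the present value at time $0$, so that the $e^{-\rho t}$ factor multiplies $\ln c$ consistently. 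With either reading, the $c$-condition gives $c^*=\rho a$. The condition in $\theta$ reads $\tilde J_a a\,m+\tilde J_{aa}a^2V\theta=0$, hence $\theta^*=V^{-1}m$. Splitting
\[
m=(\mu-r\mathbf{1})+\alpha\,\diag(V)
\]
and writing $\diag(V)$ as the vector $\diag(V)\mathbf{1}$ in the matrix sense of the statement yields the second form of \eqref{eq:opt-alpha}.

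\textbf{Step 3: matching the constant.} Substituting the maximizers, the Hamiltonian becomes
\[
\ln\rho+\ln a+\rho^{-1}r-1+\tfrac{1}{2\rho}m^\top V^{-1}m .
\]
Equating this to $\rho\beta_0+\ln a$ cancels the $\ln a$ terms and gives
\[
\beta_0=\rho^{-1}\bigl(r/\rho+\ln\rho-1\bigr)+\tfrac{1}{2\rho^2}m^\top V^{-1}m,
\]
which is the claimed constant. Only the terms involving $\theta$ change relative to the one-asset case, with $(\mu-r)^2/\sigma^2$ replaced by $m^\top V^{-1}m$.

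\textbf{Step 4: verification.} The candidate policy is admissible. Under it,
\[
\frac{da}{a}=(r+m^\top V^{-1}m-\rho)\,dt+m^\top V^{-1}\Sigma\,dW,
\]
a geometric Brownian motion, so $a>0$ and \eqref{eq:int-cond} holds. It remains to verify optimality and the transversality condition $\E[e^{-\rho T}\ln a_T]\to0$. For the candidate policy, $\ln a_T$ grows linearly in mean, so this limit is immediate. For a general admissible control, apply It\^o's formula to $e^{-\rho s}\tilde J(a_s)$, use the HJB inequality, and localize. Passing to the limit requires a lower bound on $\E\ln a_T$, or restricting admissibility so that $\liminf_T e^{-\rho T}\E\tilde J(a_T)\ge 0$. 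This is the delicate part of verification; as in Theorem~\ref{thm:Merton1D}, I would assume this standard transversality condition rather than prove it.
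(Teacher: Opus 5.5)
Your proposal is correct and follows the same route as the paper: insert the logarithmic ansatz, read off $c^*=\rho a$ and $\theta^*=V^{-1}(\mu^{\mathrm{Ito}}-r\mathbf{1})$ from the first-order conditions of a strictly concave Hamiltonian, and match the $\ln a$ and constant terms to obtain $\beta_1=1/\rho$ and the stated $\beta_0$. Your current-value reformulation is the consistent reading that the paper's step ``dividing out the common factor $e^{-\rho t}$'' tacitly relies on (taken literally with the stated $J$, \eqref{eq:HJB-alpha} would give $c^*=\rho a e^{\rho t}$), and your Step 4 verification sketch goes beyond the paper, which stops at the formal HJB computation.
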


\begin{remark}
The $\alpha$--interpretation modifies optimal portfolio choice exclusively
through an additive shift in the vector of excess returns. In particular, the
correction term
\[
\alpha\,V^{-1}\diag(V)\,\mathbf{1}
\]
increases risky exposure in each asset in proportion to its marginal variance.
When $n=1$ and $V=\sigma^2$, this reduces to the transparent formula
\[
\theta^*_\alpha
  = \frac{\mu-r}{\sigma^2} + \alpha,
\]
showing that each unit increase in $\alpha$ raises the optimal risky position by
one unit.
\end{remark}

\begin{proof}[Proof of Theorem~\ref{thm:n-asset-alpha}]
Let $\lambda:=\mu^{\mathrm{Ito}}-r\mathbf{1}$. Motivated by the homotheticity of
logarithmic utility, we insert the ansatz
\(
J(a,t)=\bigl(\beta_0+\beta_1\ln a\bigr)e^{-\rho t}
\)
into the HJB equation~\eqref{eq:HJB-alpha}. Dividing out the common factor
$e^{-\rho t}$ yields the reduced Hamiltonian
\[
\mathcal{H}(c,\theta;a)
  = \ln c - \tfrac{\beta_1}{a}c + \beta_1 r + \beta_1\theta^\top\lambda
    - \tfrac12\beta_1\theta^\top V\theta.
\]
This expression is strictly concave in $(c,\theta)$, and the first--order
conditions therefore identify the unique maximizers
\[
c^*(a)=\frac{a}{\beta_1},
\qquad
\theta^*=V^{-1}\lambda.
\]
Substituting these expressions back into the HJB equation forces the coefficient
of $\ln a$ to vanish, which yields $\beta_1=1/\rho$, and determines $\beta_0$
uniquely as stated. Positivity of $c^*$ follows from $\beta_1>0$, and uniqueness
from strict concavity.
\end{proof}

\section{Factor--driven risky asset with correlated noise under the $\alpha$--interpretation}
\label{sec:factor-correlated-noise}

The analysis of Sections~\ref{sec:n-asset-alpha} shows that, in markets with
constant volatilities, the $\alpha$--interpretation can be absorbed into a
deterministic modification of expected returns. From a structural viewpoint,
this raises a natural question: under which modeling assumptions does the
interpretation of stochastic integration have genuinely state--dependent
consequences?

In this section we address this question by introducing a factor--driven
volatility, coupled to the risky return through correlated Brownian noise. This
setting encompasses classical stochastic volatility models and isolates the
precise channel through which the $\alpha$--interpretation affects the dynamics:
the instantaneous covariation between the factor and the return. The resulting
corrections to optimal portfolio choice are no longer static, but depend on the
current state of the factor.

\subsection{Financial market}\label{subsec:factor-market}

We now specify the financial market underlying the factor--driven model. The distinguishing
feature of this setting is the presence of an auxiliary factor process whose
fluctuations affect the volatility of the risky asset and are correlated with
the return noise.

\medskip
We work on a filtered probability space
$(\Omega,\mathcal F,(\mathcal F_t)_{t\ge0},\mathbb P)$ supporting a
two--dimensional Brownian motion
\[
W_t := (W_t^{S},W_t^{X})^\top,
\qquad t\ge 0,
\]
whose components drive, respectively, the risky return and the factor dynamics.
The instantaneous covariance structure is given by
\begin{equation}\label{eq:cov-matrix}
    [W]_t = R\,t,
    \qquad
    R :=
    \begin{pmatrix}
    1 & \varrho\\
    \varrho & 1
    \end{pmatrix},
    \qquad \varrho\in[-1,1].
\end{equation}
Equivalently,
\[
\dd\langle W^{S},W^{X}\rangle_t = \varrho\,\dd t,
\qquad
\dd\langle W^{S}\rangle_t=\dd\langle W^{X}\rangle_t=\dd t.
\]
The correlation parameter $\varrho$ quantifies the instantaneous coupling
between factor fluctuations and asset returns and will play a central role in
the $\alpha$--dependent corrections derived below.

\medskip
As before, the money--market account $b=(b_t)_{t\ge0}$ evolves according to
\eqref{eq: money_market}. The risky asset $S=(S_t)_{t\ge0}$ is modeled in the
$\alpha$--interpretation by
\begin{equation}\label{eq:S-factor-alpha-4}
    \frac{\dd S_t}{S_t}
    =
    \mu(X_t)\,\dd t
    +
    \sigma(X_t)\,\circ_\alpha \dd W_t^{S},
\end{equation}
where the drift and volatility depend on the current state of the factor.
Throughout we assume $\mu,\sigma\in C^1(\mathbb R)$ and $\sigma(x)\neq 0$ for all
$x$.

\medskip
The factor process $X=(X_t)_{t\ge0}$ evolves autonomously according to
\begin{equation}\label{eq:X-factor-alpha-4}
    \dd X_t
    =
    b(X_t)\,\dd t
    +
    \nu(X_t)\,\circ_\alpha \dd W_t^{X},
\end{equation}
with $b,\nu\in C^1(\mathbb R)$. While $X$ does not depend on $S$ directly, the
shared noise structure encoded in~\eqref{eq:cov-matrix} induces an indirect
coupling that will become apparent after conversion to It\^o form.

\medskip
Unless stated otherwise, coefficients are chosen so that
\eqref{eq:S-factor-alpha-4}--\eqref{eq:X-factor-alpha-4} admit unique strong
solutions and all controls are admissible in the sense of
Assumption~\ref{ass:standing}. In addition, the present section allows for
state dynamics that fall outside the globally Lipschitz framework adopted
earlier, as formalized below.

\begin{assumption}\label{ass:factor}
In Section~\ref{sec:factor-correlated-noise} we allow state dynamics with
coefficients that are not globally Lipschitz (e.g.\ square--root diffusions),
as is standard in stochastic volatility models.

\begin{enumerate}[(i)]
\item \textbf{Local well--posedness.}
The factor/asset system (in It\^o form) admits a unique strong solution up to
its maximal lifetime $\tau\in(0,\infty]$.

\item \textbf{Nonattainment of boundaries / positivity (when relevant).}
In models with a boundary (e.g.\ $V_t\ge 0$), parameters are chosen so that the
boundary is not attained, hence $\tau=\infty$ and the state remains in its
natural domain almost surely.

\item \textbf{Admissibility of feedback controls.}
The candidate optimal feedback controls derived in this section are admissible,
i.e.\ the corresponding wealth process remains strictly positive and satisfies,
for all $T>0$,
\[
\mathbb E\!\left[\int_0^T \pi_t^2\sigma(X_t)^2\,\dd t\right]<\infty,
\qquad
\mathbb E\!\left[\int_0^T c_t\,\dd t\right]<\infty.
\]
\end{enumerate}
\end{assumption}

\subsection{Conversion to It\^o form}\label{subsec:factor-Ito}

Equation \eqref{eq:X-factor-alpha-4} is equivalent to the It\^o SDE
\begin{equation}\label{eq:X-factor-Ito-4}
    \dd X_t
    =
    \Big(b(X_t) + \alpha\,\nu(X_t)\nu'(X_t)\Big)\dd t
    +
    \nu(X_t)\,\dd W_t^{X}.
\end{equation}

For the risky asset, note that the diffusion coefficient is $S_t\sigma(X_t)$
driven by $W^S$. Since $X$ and $W^S$ have nonzero quadratic covariation whenever
$\varrho\neq 0$, we compute
\begin{equation}\label{eq:cov-sigmaW}
    \dd\langle \sigma(X), W^{S}\rangle_t
    =
    \sigma'(X_t)\,\dd\langle X,W^{S}\rangle_t.
\end{equation}
From \eqref{eq:X-factor-Ito-4} and \eqref{eq:cov-matrix}, the martingale part of $X$
is $\int_0^t \nu(X_s)\dd W_s^{X}$, hence
\[
\dd\langle X,W^{S}\rangle_t
=
\dd\Big\langle \int_0^\cdot \nu(X_s)\dd W_s^{X},\,W^{S}\Big\rangle_t
=
\nu(X_t)\,\dd\langle W^{X},W^{S}\rangle_t
=
\varrho\,\nu(X_t)\,\dd t.
\]
Substituting into \eqref{eq:cov-sigmaW} yields
\[
\dd\langle \sigma(X), W^{S}\rangle_t
=
\varrho\,\sigma'(X_t)\nu(X_t)\,\dd t.
\]
Therefore \eqref{eq:S-factor-alpha-4} is equivalent to the It\^o SDE
\begin{equation}\label{eq:S-factor-Ito-4}
    \frac{\dd S_t}{S_t}
    =
    \Big(\mu(X_t) + \alpha\,\varrho\,\sigma'(X_t)\nu(X_t)\Big)\dd t
    +
    \sigma(X_t)\,\dd W_t^{S}.
\end{equation}

\begin{remark}[When does $\alpha$ matter?]\label{rem:alpha-trivial}
If $\varrho=0$ (independent factor and return shocks), then the $\alpha$--dependent
drift correction in \eqref{eq:S-factor-Ito-4} vanishes. Thus, in the present class of
factor models, the $\alpha$--interpretation affects optimal portfolio choice only through
the \emph{noise correlation} between the factor and the risky return.
\end{remark}

\subsection{Wealth dynamics}\label{subsec:wealth-factor}

We next derive the wealth dynamics induced by the factor--dependent asset
returns. Let $A_t$ denote the investor’s wealth at time $t$, and let $\pi_t$
denote the fraction of wealth invested in the risky asset, with the remainder
invested in the money--market account. As before, $c_t$ denotes the consumption
rate.

Under the self--financing constraint, and using the It\^o form of the asset
dynamics derived above, the wealth process satisfies
\begin{equation}\label{eq:wealth-factor-4}
    \frac{\dd A_t}{A_t}
    =
    \Big(
        r
        +
        \pi_t\big(\mu_{\mathrm{eff}}(X_t)-r\big)
        -
        \frac{c_t}{A_t}
    \Big)\dd t
    +
    \pi_t\sigma(X_t)\,\dd W_t^{S},
\end{equation}
where the \emph{effective return drift} is given by
\begin{equation}\label{eq:mu-eff-factor-4}
    \mu_{\mathrm{eff}}(x)
    :=
    \mu(x) + \alpha\,\varrho\,\sigma'(x)\nu(x).
\end{equation}

Relative to the constant--volatility case, the structure of the wealth equation
is unchanged. All effects of the $\alpha$--interpretation are captured by the
state--dependent drift correction $\mu_{\mathrm{eff}}(X_t)-\mu(X_t)$, which
originates from the interaction between the factor dynamics and the correlated
return noise. In particular, when $\varrho=0$ the effective drift coincides with
the original drift, and the $\alpha$--interpretation has no impact on the wealth
dynamics.

\subsection{Logarithmic utility maximization}\label{subsec:log-factor}

We now consider the investor’s optimal consumption--investment problem in the
factor--driven market. The objective is to maximize discounted expected utility
of consumption,
\[
    \sup_{(\pi,c)}
    \mathbb E\Bigg[
        \int_0^\infty e^{-\rho t}\log(c_t)\,\dd t
    \Bigg],
    \qquad \rho>0,
\]
over admissible controls $(\pi_t,c_t)$.

Despite the presence of a stochastic factor, the problem retains a simple
structure under logarithmic utility. In particular, log utility eliminates
intertemporal hedging motives and preserves homotheticity in wealth. Motivated
by this observation, we seek a value function of the separable form
\[
V(a,x)=\log a + v(x),
\]
where $v$ accounts for the contribution of the factor process.

Substituting this ansatz into the associated Hamilton--Jacobi--Bellman equation
and optimizing pointwise with respect to $c$ and $\pi$ yields the first--order
conditions
\begin{equation}\label{eq:controls-factor-4}
    c_t^\ast = \rho A_t,
    \qquad
    \pi_t^\ast
    =
    \frac{\mu_{\mathrm{eff}}(X_t)-r}{\sigma(X_t)^2}.
\end{equation}
Using the explicit form of the effective drift \eqref{eq:mu-eff-factor-4}, the
optimal risky fraction can be decomposed as
\begin{equation}\label{eq:pi-alpha-factor-4}
    \pi_t^\ast
    =
    \frac{\mu(X_t)-r}{\sigma(X_t)^2}
    +
    \alpha\,\varrho\,\frac{\sigma'(X_t)\nu(X_t)}{\sigma(X_t)^2}.
\end{equation}
The first term coincides with the classical myopic demand evaluated at the
current factor level. The second term is the genuine $\alpha$--correction,
originating from the interaction between the factor dynamics and the correlated
return noise.

This correction vanishes in each of the following cases: (i) the volatility
coefficient $\sigma$ is constant, so that factor fluctuations do not affect the
return variance; (ii) the factor has no diffusive component ($\nu\equiv0$); or
(iii) the factor and return noises are uncorrelated ($\varrho=0$). In all three
situations, the $\alpha$--interpretation has no effect on optimal portfolio
choice, and the problem reduces to the classical Merton setting.

\subsection{Example: Heston stochastic volatility}\label{subsec:heston-example}

We illustrate the mechanism identified above by specializing to the Heston
stochastic volatility model, which provides a canonical example of a
factor--driven risky asset with correlated noise. In this setting, the factor
process represents the instantaneous variance of returns, and correlation
between return and variance shocks is an empirically well--established feature
of financial markets.

We take the factor to be the variance process $V_t$ and set $X_t=V_t$, with
coefficients
\[
\sigma(v)=\sqrt v,\qquad
b(v)=\kappa(\theta - v),\qquad
\nu(v)=\xi\sqrt v,
\]
where $\kappa,\theta,\xi>0$ and $V_0>0$. The resulting $\alpha$--interpreted
dynamics are
\begin{align}
    \frac{\dd S_t}{S_t}
    &= \mu\,\dd t + \sqrt{V_t}\,\circ_\alpha \dd W_t^{S},
    \label{eq:heston-alpha-S}\\
    \dd V_t
    &= \kappa(\theta - V_t)\,\dd t + \xi\sqrt{V_t}\,\circ_\alpha \dd W_t^{X},
    \label{eq:heston-alpha-V}
\end{align}
with $\dd\langle W^{S},W^{X}\rangle_t=\varrho\,\dd t$.

In the Heston case the $\alpha$--dependent corrections can be computed explicitly.
Since $\sigma'(v)=\frac{1}{2\sqrt v}$ and $\nu(v)=\xi\sqrt v$, we have
$\sigma'(v)\nu(v)=\xi/2$, which is constant. As a consequence, the It\^o form of
\eqref{eq:heston-alpha-S} becomes
\begin{equation}\label{eq:heston-Ito-S}
    \frac{\dd S_t}{S_t}
    =
    \Big(\mu + \alpha\,\varrho\,\frac{\xi}{2}\Big)\dd t
    +
    \sqrt{V_t}\,\dd W_t^{S}.
\end{equation}
Thus, in contrast to the general factor model, the $\alpha$--interpretation
induces a constant shift in the effective expected return.

For the variance process, the identity
$\nu(v)\nu'(v)=\xi^2/2$ shows that \eqref{eq:heston-alpha-V} is equivalent to
\begin{equation}\label{eq:heston-Ito-V}
    \dd V_t
    =
    \Big(\kappa(\theta - V_t) + \alpha\,\frac{\xi^2}{2}\Big)\dd t
    +
    \xi\sqrt{V_t}\,\dd W_t^{X}.
\end{equation}
Equivalently, the drift can be rewritten as $\kappa(\theta_\alpha - V_t)$ with
\begin{equation}\label{eq:theta-alpha}
    \theta_\alpha := \theta + \alpha\,\frac{\xi^2}{2\kappa}.
\end{equation}
The $\alpha$--interpretation therefore shifts the long--run mean of the variance
process while preserving its CIR structure.

In this model the effective return drift \eqref{eq:mu-eff-factor-4} reduces to
the constant
\[
\mu_{\mathrm{eff}}
=
\mu + \alpha\,\varrho\,\frac{\xi}{2}.
\]
Consequently, optimal consumption remains proportional to wealth, while the
optimal risky fraction takes the explicit form
\begin{equation}\label{eq:heston-controls}
    c_t^\ast = \rho A_t,
    \qquad
    \pi_t^\ast
    =
    \frac{\mu_{\mathrm{eff}}-r}{V_t}
    =
    \frac{\mu-r}{V_t}
    + \alpha\,\varrho\,\frac{\xi}{2}\,\frac{1}{V_t}.
\end{equation}
The $\alpha$--correction thus scales inversely with the current variance level:
it is amplified during low--volatility periods and becomes negligible when
volatility is high.

Since $\pi_t^\ast$ grows like $1/V_t$, admissibility requires strict positivity
of the variance process and sufficient integrability of $1/V_t$. A classical
sufficient condition ensuring that the boundary $V=0$ is not attained is the Feller (see~\cite{cox1985theory}) for the CIR process \eqref{eq:heston-Ito-V},
\[
2\kappa\theta_\alpha \ge \xi^2,
\]
with $\theta_\alpha$ given by \eqref{eq:theta-alpha}.

\appendix

\section{A dictionary between noise interpretations}\label{app:dictionary}

In the main text we formulate asset--price dynamics using stochastic differential
equations written in the $\alpha$--interpretation, where the stochastic integral
$\circ_\alpha$ interpolates between It\^o ($\alpha=0$), Stratonovich
($\alpha=\tfrac12$), and Klimontovich ($\alpha=1$) conventions.  For the
consumption--investment problem studied in this paper, we focus on geometric
Brownian motion, which corresponds to the simplest instance of multiplicative
noise and allows for closed--form solutions.

The purpose of this appendix is to place this choice within a broader modelling
framework.  We collect general conversion formulas that allow one to translate
an SDE written under a given interpretation into an equivalent SDE under another
interpretation, for diffusion coefficients beyond the geometric Brownian case.
These dictionaries enable the reader to understand how changes in noise
interpretation modify effective drift terms for more general stock price
dynamics, and thus how the analysis of the main text can be extended to
alternative multiplicative or state--dependent volatility structures.  The drift
corrections derived here underlie, as a special case, the shifts used in
Sections~\ref{subsec:merton-classical} and~\ref{sec:n-asset-alpha}.

\subsection{General $\alpha\to\gamma$ conversion}

Let $W=(W_t)_{t\ge0}$ be an $m$–dimensional Brownian motion on the filtered
probability space specified in Section~\ref{sec: math prelimi}, and consider the
$d$–dimensional SDE
\begin{equation}\label{eq:SDE-alpha-general}
\dd X_t
  = b(X_t)\,\dd t + \Sigma(X_t)\,\circ_\alpha\dd W_t,
\end{equation}
where $b:\R^d\to\R^d$ and $\Sigma:\R^d\to\R^{d\times m}$ are sufficiently smooth
(e.g.\ $C^2$ with bounded derivatives), and $\circ_\alpha$ denotes integration in
the $\alpha$–interpretation.

Writing \eqref{eq:SDE-alpha-general} in It\^o form corresponds to the special
case $\gamma=0$ in the following general dictionary.

\begin{proposition}[Conversion between $\alpha$– and $\gamma$–interpretations]
\label{prop:alpha-gamma-dictionary}
Fix $\alpha,\gamma\in[0,1]$ and let $X$ solve \eqref{eq:SDE-alpha-general}.
Then $X$ also solves
\begin{equation}\label{eq:SDE-gamma-general}
\dd X_t
  = \widetilde b(X_t)\,\dd t + \Sigma(X_t)\,\circ_\gamma\dd W_t,
\end{equation}
where the drift $\widetilde b$ is given componentwise by
\begin{equation}\label{eq:drift-shift-general}
\widetilde b_i(x)
  = b_i(x)
    + (\alpha-\gamma)\sum_{k=1}^m\sum_{j=1}^d
        \Sigma_{jk}(x)\,\partial_{x_j}\Sigma_{ik}(x),
\qquad i=1,\dots,d.
\end{equation}
Equivalently, the diffusion coefficient $\Sigma(x)$ is unchanged, while the
drift is shifted by
\[
\widetilde b(x) = b(x) + (\alpha-\gamma)\,c(x),\qquad
c_i(x):=\sum_{k,j}\Sigma_{jk}(x)\,\partial_{x_j}\Sigma_{ik}(x).
\]
\end{proposition}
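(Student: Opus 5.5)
The plan is to route everything through the It\^o interpretation. For a continuous semimartingale integrand $H$, the $\alpha$--integral is the limit of Riemann sums evaluated at $(1-\alpha)t_k+\alpha t_{k+1}$. This gives the identity
\[
\int_0^t H_s\circ_\alpha \dd W_s=\int_0^t H_s\,\dd W_s+\alpha\,\langle H,W\rangle_t .
\]
Componentwise, for $H=\Sigma_{i\cdot}(X)$ and $W=(W^1,\dots,W^m)$, this reads
\[
\sum_k \int_0^t \Sigma_{ik}(X_s)\circ_\alpha \dd W^k_s
=\sum_k\int_0^t \Sigma_{ik}(X_s)\,\dd W^k_s+\alpha\sum_k\langle \Sigma_{ik}(X),W^k\rangle_t .
\]
I would take this as the working definition of $\circ_\alpha$; it is consistent with the paper's usage in Section~\ref{subsec:factor-Ito}. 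To justify it, write $H_{\tau_k}=(1-\alpha)H_{t_k}+\alpha H_{t_{k+1}}+o$ and split
\[
\sum_k H_{\tau_k}\Delta W_k=\sum_k H_{t_k}\Delta W_k+\alpha\sum_k\Delta H_k\,\Delta W_k .
\]
The first sum converges to the It\^o integral, and the second converges in probability to $\alpha\langle H,W\rangle_t$.

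\textbf{Step 1.} Compute the covariation term. Assume $X$ solves \eqref{eq:SDE-alpha-general}; by the identity above, it is a continuous semimartingale whose martingale part is $\int\Sigma(X)\,\dd W$. Applying It\^o's formula to $\Sigma_{ik}(X)$, which is $C^2$, its martingale part is $\sum_j\int\partial_{x_j}\Sigma_{ik}(X)\,\Sigma_{jl}(X)\,\dd W^l$. With independent components ($\dd\langle W^l,W^k\rangle=\delta_{lk}\dd t$) this gives
\[
\dd\langle\Sigma_{ik}(X),W^k\rangle_t=\sum_j\Sigma_{jk}(X_t)\,\partial_{x_j}\Sigma_{ik}(X_t)\,\dd t .
\]
Summing over $k$ yields exactly $c_i(X_t)\,\dd t$. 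Hence $X$ solves the It\^o SDE with drift $b+\alpha c$.

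\textbf{Step 2.} Apply the same identity with $\gamma$ in place of $\alpha$, to the same process $X$ and the same integrand. The $\gamma$--integral then equals the It\^o integral plus $\gamma c(X)\,\dd t$. Therefore
\[
\dd X=(b+\alpha c)\,\dd t+\Sigma\,\dd W=(b+(\alpha-\gamma)c)\,\dd t+\Sigma\circ_\gamma \dd W,
\]
which is \eqref{eq:drift-shift-general}. The diffusion coefficient is visibly unchanged.

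\textbf{Main obstacle.} The delicate point is the Riemann-sum identity in Step 0, specifically the convergence
\[
\sum_k\Delta H_k\,\Delta W_k\to\langle H,W\rangle_t .
\]
This requires $H$ to be a continuous semimartingale. I would invoke the standard result that discrete covariations of continuous semimartingales converge in probability along partitions with vanishing mesh; this is where the $C^2$ assumption on $\Sigma$ is used. The boundedness of derivatives ensures the remainder terms are controlled.

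There is a circularity to watch: $X$ must already be known to be a semimartingale before the $\alpha$--integral is defined. I would resolve this by defining solutions of \eqref{eq:SDE-alpha-general} as continuous semimartingales satisfying the equation. Existence then follows from the It\^o form with Lipschitz coefficients $b+\alpha c$ and $\Sigma$.
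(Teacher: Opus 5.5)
Your proposal is correct and is essentially the paper's sketch carried out in detail. Both write the $\alpha$-- and $\gamma$--integrals as Riemann sums at intermediate points, reduce their difference to the covariation $\sum_k\langle\Sigma_{ik}(X),W^k\rangle$, and identify it as $c_i(X_t)\,\dd t$; your route through the It\^o form just applies the $\alpha$-to-It\^o identity once with $\alpha$ and once with $\gamma$.

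One imprecision: with evaluation at the time $\tau_k=(1-\alpha)t_k+\alpha t_{k+1}$, the ``$+o$'' in $H_{\tau_k}=(1-\alpha)H_{t_k}+\alpha H_{t_{k+1}}+o$ is of the same order as $\Delta H_k$ and only vanishes after summation. So the step that really needs an argument is $\sum_k(H_{\tau_k}-H_{t_k})\Delta W_k\to\alpha\langle H,W\rangle_t$, obtained by splitting $\Delta W_k$ at $\tau_k$ and using that $\dd\langle H,W\rangle_t$ has a continuous density, not the full-interval covariation limit you single out.
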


\begin{proof}[Proof sketch]
The proof is standard and relies on writing both $\alpha$– and $\gamma$–integrals
as limits of Riemann sums with evaluation at intermediate points of each time
interval.  For smooth $\Sigma$, the difference between the two evaluations can be
expanded to first order and expressed in terms of the quadratic covariation of
$X$ and $W$.  This yields the correction term proportional to
$(\alpha-\gamma)\sum_{j,k}\Sigma_{jk}\partial_{x_j}\Sigma_{ik}$.
Rigorous derivations can be found, for example, in classical references on
generalised stochastic integrals.
\end{proof}

The It\^o form of \eqref{eq:SDE-alpha-general} corresponds to $\gamma=0$:
\begin{equation}\label{eq:alpha-to-ito-appendix}
\dd X_t
  = \Bigl(b(X_t) + \alpha\,c(X_t)\Bigr)\dd t + \Sigma(X_t)\,\dd W_t,
\qquad
c_i(x)=\sum_{k,j}\Sigma_{jk}(x)\,\partial_{x_j}\Sigma_{ik}(x).
\end{equation}
Similarly, the Stratonovich form is obtained by taking $\gamma=\tfrac12$, and
the Klimontovich form by taking $\gamma=1$.

\begin{remark}[Correlated Brownian motions]\label{rem:correlated-appendix}
The conversion formulas above are stated for an $m$--dimensional Brownian motion
$W$ with identity covariance, i.e.\ $[W]_t = I_m\,t$. This entails no loss of
generality.

Indeed, if $\widetilde W$ is an $m$--dimensional Brownian motion with constant
covariance matrix $R$, one may write $\widetilde W = C W$ for some deterministic
matrix $C$ satisfying $C C^\top = R$. Rewriting
\eqref{eq:SDE-alpha-general} in terms of $W$ replaces the diffusion coefficient
$\Sigma(x)$ by $\Sigma(x)C$, and the conversion formula
\eqref{eq:drift-shift-general} applies verbatim.

In particular, in the factor--driven model of
Section~\ref{sec:factor-correlated-noise}, the additional drift term
$\alpha\,\varrho\,\sigma'(x)\nu(x)$ arises precisely from this covariance
structure.
\end{remark}

\subsection{Diagonal--multiplicative noise}

In the main body of the paper we work with a class of multiplicative diffusion
models in which each risky asset is affected by a common vector of Brownian
shocks, but with volatility proportional to its own price level.
Let $S_t=(S_t^1,\dots,S_t^n)\in(0,\infty)^n$ denote the vector of risky asset
prices, let $\mu\in\R^n$ be a constant drift vector, and let
$\Gamma\in\R^{n\times n}$ be a constant volatility loading matrix.
We consider the $\alpha$--interpreted SDE
\begin{equation}\label{eq:diag-mult-alpha}
  \dd S_t
  =
  \mathrm{D}(S_t)\Bigl(\mu\,\dd t + \Gamma\,\circ_\alpha\dd W_t\Bigr),
\end{equation}
where $\mathrm{D}(S)=\diag(S^1,\dots,S^n)$ and $W_t$ is an $n$--dimensional
Brownian motion.

Equivalently, in components,
\[
  \dd S_t^i
  =
  S_t^i
  \left(
    \mu_i\,\dd t + \sum_{k=1}^n \Gamma_{ik}\,\circ_\alpha\dd W_t^k
  \right),
  \qquad i=1,\dots,n.
\]

The diffusion coefficient in \eqref{eq:diag-mult-alpha} is the matrix--valued
function
\[
  B(S) := \mathrm{D}(S)\Gamma \in \R^{n\times n},
\]
with components
\[
  B_{ik}(S) = S^i\,\Gamma_{ik}.
\]
In particular, the $i$--th \emph{row} of $B(S)$ depends only on the component
$S^i$, a structural property that will be crucial in the computation of the
$\alpha$--to--It\^o drift correction.

Define the constant covariance matrix
\[
  V := \Gamma\Gamma^\top \in \R^{n\times n},
  \qquad
  V_{ij} = \sum_{k=1}^n \Gamma_{ik}\Gamma_{jk}.
\]
Applying the general conversion formula
\eqref{eq:alpha-to-ito-appendix} to \eqref{eq:diag-mult-alpha}, we obtain that
the It\^o drift correction has components
\[
  c_i(S)
  :=
  \sum_{j=1}^n\sum_{k=1}^n
  B_{jk}(S)\,\partial_{S^j} B_{ik}(S).
\]
Since $B_{ik}(S)=S^i\Gamma_{ik}$, we have
\[
  \partial_{S^j}B_{ik}(S)=\Gamma_{ik}\,\delta_{ij},
\]
and therefore
\[
  c_i(S)
  =
  \sum_{k=1}^n B_{ik}(S)\Gamma_{ik}
  =
  S^i \sum_{k=1}^n \Gamma_{ik}^2
  =
  S^i\,V_{ii},
  \qquad i=1,\dots,n.
\]

The $\alpha$--interpreted SDE \eqref{eq:diag-mult-alpha} is thus equivalent to
the It\^o SDE
\begin{equation}\label{eq:diag-mult-ito}
  \dd S_t
  =
  \mathrm{D}(S_t)\Bigl(\mu^{\mathrm{Ito}}\,\dd t + \Gamma\,\dd W_t\Bigr),
  \qquad
  \mu^{\mathrm{Ito}} := \mu + \alpha\,\diag(V).
\end{equation}
This coincides exactly with Proposition~\ref{prop:alpha-to-ito} in
Section~\ref{sec:n-asset-alpha}.

More generally, changing the interpretation parameter from $\alpha$ to
$\gamma$ leaves the diffusion term unchanged and shifts the drift according to
\[
  \mu^{\mathrm{Ito}}
  \;\longmapsto\;
  \mu^{\mathrm{Ito}} + (\alpha-\gamma)\,\diag(V).
\]
In particular, moving from It\^o to Stratonovich adds
$\tfrac12\diag(V)$ to the drift, while moving from It\^o to Klimontovich adds
$\diag(V)$.

\begin{remark}
From the perspective of the consumption--investment problem, this dictionary
justifies treating different noise interpretations as changes in effective
expected returns, while keeping both the volatility structure and the
self--financing constraint unchanged. This mechanism is responsible for the
shift
\[
  \theta^*_\alpha
  =
  V^{-1}(\mu-r\mathbf{1}) + \alpha V^{-1}\diag(V)\mathbf{1}
\]
in the optimal risky portfolio derived in
Theorem~\ref{thm:n-asset-alpha}.
\end{remark}

\printbibliography

@incollection{merton1975optimum,
  title={Optimum consumption and portfolio rules in a continuous-time model},
  author={Merton, Robert C},
  booktitle={Stochastic optimization models in finance},
  pages={621--661},
  year={1975},
  publisher={Elsevier}
}

@article{Merton1969,
  author  = {Robert C. Merton},
  title   = {Lifetime Portfolio Selection under Uncertainty: The Continuous-Time Case},
  journal = {The Review of Economics and Statistics},
  year    = {1969},
  volume  = {51},
  number  = {3},
  pages   = {247--257},
  doi     = {10.2307/1926560}
}

@article{Merton1971,
  author  = {Robert C. Merton},
  title   = {Optimum Consumption and Portfolio Rules in a Continuous-Time Model},
  journal = {Journal of Economic Theory},
  year    = {1971},
  volume  = {3},
  number  = {4},
  pages   = {373--413},
  doi     = {10.1016/0022-0531(71)90038-X}
}

@article{HansenLunde2006,
  author  = {Peter R. Hansen and Asger Lunde},
  title   = {Realized Variance and Market Microstructure Noise},
  journal = {Journal of Business and Economic Statistics},
  year    = {2006},
  volume  = {24},
  number  = {2},
  pages   = {127--161},
  doi     = {10.1198/073500106000000071}
}

@article{AitSahaliaYu2009,
  author  = {Yacine A"{i}t-Sahalia and Yingying Yu},
  title   = {High Frequency Market Microstructure Noise Estimates and Liquidity Measures},
  journal = {NBER Working Paper},
  year    = {2009},
  number  = {15076},
  note    = {Available at \url{https://www.princeton.edu/~yacine/liquidity.pdf}}
}

@book{CarteaJaimungalPenalva2015,
  author    = {\'Alvaro Cartea and Sebastian Jaimungal and Jos\'e Penalva},
  title     = {Algorithmic and High-Frequency Trading},
  publisher = {Cambridge University Press},
  year      = {2015},
  isbn      = {978-1-107-09114-6}
}

@article{FabozziFocardiJonas2011,
  author  = {Frank J. Fabozzi and Sergio M. Focardi and Caroline Jonas},
  title   = {High-Frequency Trading: Methodologies and Market Impact},
  journal = {Review of Futures Markets},
  year    = {2011},
  volume  = {19},
  number  = {Special Issue},
  pages   = {7--38},
  note    = {\url{https://www.sergiofocardi.net/Papers/HighFrequencyTrading_Methodologies.pdf}}
}

@article{DosReisPlatonov2021,
  author  = {Gon\c{c}alo dos Reis and Volker Platonov},
  title   = {On the Relation between Stratonovich and It\^o Integrals with Functional Integrands of Conditional Measure Flows},
  journal = {Stochastic Processes and their Applications},
  year    = {2021},
  doi     = {10.1016/j.spa.2021.04.004},
  note    = {Preprint version available on arXiv:2111.03523}
}

@article{YuanAo2012,
  author  = {Ruoshi Yuan and Ping Ao},
  title   = {Beyond It\^o versus Stratonovich},
  journal = {Journal of Statistical Mechanics: Theory and Experiment},
  year    = {2012},
  pages   = {P07010},
  doi     = {10.1088/1742-5468/2012/07/P07010}
}

@article{Bender2003,
  author  = {Christian Bender},
  title   = {An Itô Formula for a Fractional Stratonovich Type Integral with Arbitrary Hurst Parameter and Stratonovich Self-Financing Arbitrage},
  journal = {Preprint, Department of Mathematics, University of Konstanz},
  year    = {2003},
  note    = {Available at \url{https://www.math.uni-konstanz.de/~kohlmann/ftp/dp02_07.pdf}}
}

@article{cheridito2003arbitrage,
  title={Arbitrage in fractional Brownian motion models},
  author={Cheridito, Patrick},
  journal={Finance and stochastics},
  volume={7},
  number={4},
  pages={533--553},
  year={2003},
  publisher={Springer}
}

@article{ayala2025reversibility,
  title={Reversibility, covariance and coarse-graining for Langevin dynamics: On the choice of multiplicative noise},
  author={Ayala, Mario and Dirr, Nicolas and Pavliotis, Grigorios A and Zimmer, Johannes},
  journal={arXiv preprint arXiv:2511.03347},
  year={2025}
}

@article{VallejoVenegasSoriano2015,
  title   = {Optimal consumption and portfolio decisions when the risky asset is driven by a time-inhomogeneous Markov modulated diffusion process},
  author  = {Vallejo-Jiménez, Benjamín and Venegas-Martínez, Francisco and Soriano-Morales, Yazmín Viridiana},
  journal = {International Journal of Pure and Applied Mathematics},
  volume  = {104},
  number  = {3},
  pages   = {353--362},
  year    = {2015},
  url     = {https://www.researchgate.net/publication/284887406_Optimal_consumption_and_portfolio_decisions_when_the_risky_asset_is_driven_by_a_time-inhomogeneous_Markov_modulated_diffusion_process}
}

@article{VenegasMartinez2022,
  title   = {Consumption and portfolio rules with stochastic dynamics driven by Markov switching processes},
  author  = {Venegas-Mart{\'\i}nez, Francisco and coauthors},
  journal = {Mathematics},
  volume  = {10},
  number  = {16},
  pages   = {2926},
  year    = {2022},
  doi     = {10.3390/math10162926},
  url     = {https://www.mdpi.com/2227-7390/10/16/2926}
}

@article{VallejoJimenez2017,
  title   = {Closed-form consumption–investment rules under Markov-modulated preferences},
  author  = {Vallejo-Jiménez, Benjamín and Venegas-Martínez, Francisco},
  journal = {Economics Bulletin},
  volume  = {37},
  number  = {1},
  pages   = {230--239},
  year    = {2017},
  url     = {http://www.accessecon.com/Pubs/EB/2017/Volume37/EB-17-V37-I1-P28.pdf}
}

@article{kraft2008optimal,
  title={Optimal consumption and insurance: A continuous-time Markov chain approach},
  author={Kraft, Holger and Steffensen, Mogens},
  journal={ASTIN Bulletin: The Journal of the IAA},
  volume={38},
  number={1},
  pages={231--257},
  year={2008},
  publisher={Cambridge University Press}
}

@article{BoyleMertonSamuelson1992,
  author  = {Boyle, Phelim and Merton, Robert C. and Samuelson, William},
  title   = {On the relation between continuous and discrete-time portfolio problems},
  journal = {(Journal details to be completed)},
  year    = {1992}
}

@article{DavisNorman1990,
  author  = {Davis, Mark H. A. and Norman, Andrew R.},
  title   = {Portfolio selection with transaction costs},
  journal = {Mathematics of Operations Research},
  volume  = {15},
  number  = {4},
  pages   = {676--713},
  year    = {1990}
}

@article{MoehleBoyd2021,
  author  = {Moehle, Nicolas and Boyd, Stephen},
  title   = {Dynamic stochastic portfolio optimization with transaction costs and constraints},
  journal = {(Journal details to be completed)},
  year    = {2021}
}

@article{Benth2001,
  author  = {Benth, Fred Espen},
  title   = {Option theory with stochastic volatility and jumps},
  journal = {(Journal details to be completed)},
  year    = {2001}
}

@article{heston1993closed,
  title={A closed-form solution for options with stochastic volatility with applications to bond and currency options},
  author={Heston, Steven L},
  journal={The review of financial studies},
  volume={6},
  number={2},
  pages={327--343},
  year={1993},
  publisher={Oxford University Press}
}

@article{cox1985theory,
  title={A theory of the term structure of interest rates},
  author={Cox, John C and Ingersoll, Jonathan E and Ross, Stephen A and others},
  journal={Econometrica},
  volume={53},
  number={2},
  pages={385--407},
  year={1985},
  publisher={World Scientific}
}
\end{document}